\newif\ifonecol
\theoremstyle{theorem}
\newtheorem{theorem}{Theorem}
\newtheorem{corollary}{Corollary}
\newtheorem{lemma}{Lemma}
\theoremstyle{definition}
\newtheorem{definition}{Definition}
\newtheorem{remark}{Remark}
\theoremstyle{remark}
\newtheorem{example}{Example}
\newcommand*\bigcdot{\mathpalette\bigcdot@{.5}}
\newcommand*\bigcdot@[2]{\mathbin{\vcenter{\hbox{\scalebox{#2}{$\m@th#1\bullet$}}}}}
\newcommand{\bnull}{\mathbf{0}}
\newcommand{\bF}{\mathbb{F}}
\newcommand{\bS}{\mathbb{S}}
\newcommand{\bfT}{\mathbf{T}}
\newcommand{\mA}{\mathcal{A}}
\newcommand{\mB}{\mathcal{B}}
\newcommand{\mC}{\mathcal{C}}
\newcommand{\mD}{\mathcal{D}}
\newcommand{\mE}{\mathcal{E}}
\newcommand{\mF}{\mathcal{F}}
\newcommand{\mI}{\mathcal{I}}
\newcommand{\mP}{\mathcal{P}}
\newcommand{\mS}{\mathcal{S}}
\newcommand{\mT}{\mathcal{T}}
\newcommand{\mU}{\mathcal{U}}
\newcommand{\mW}{\mathcal{W}}
\newcommand{\mX}{\mathcal{X}}
\newcommand{\mY}{\mathcal{Y}}
\newcommand{\ove}{\overline{\mE}}
\newcommand{\vp}{\varphi}
\newcommand{\ceil}[1]{\left\lceil{#1}\right\rceil}
\newcommand{\floor}[1]{\left\lfloor{#1}\right\rfloor}
\newcommand{\la}{\langle}
\newcommand{\ra}{\rangle}
\newcommand{\set}[1]{\left\{ #1 \right\}}
\newcommand{\hull}[1]{\left\langle{#1}\right\rangle}
\newcommand{\wt}{\mathbf{wt}}
\DeclareMathOperator{\cs}{cs}
\DeclareMathOperator{\supp}{supp}
\begin{document}

\title{On Distance Properties of Convolutional Polar Codes}
\author{Ruslan Morozov, \IEEEmembership{Member,~IEEE}, Peter Trifonov, \IEEEmembership{Member,~IEEE}
\thanks{The authors are with the Saint Petersburg Polytechnic University, Russia. E-mail: \{rmorozov, petert\}@dcn.icc.spbstu.ru}}%
\sloppy

\maketitle
\begin{abstract}
A lower bound on minimum distance of convolutional polar codes is provided. The bound is obtained from the minimum weight of generalized cosets of the codes generated by bottom rows of the polarizing matrix.
Moreover, a construction of convolutional polar subcodes is proposed, which provides improved performance under successive cancellation list decoding. For sufficiently large list size, the decoding complexity of convolutional polar subcodes appears to be lower compared to Arikan polar subcodes with the same performance.
The error probability of successive cancellation list decoding of convolutional polar subcodes is lower than that of Arikan polar subcodes with the same list size. 
\end{abstract}
\begin{IEEEkeywords}
Convolutional polar codes, polar codes, successive cancellation decoding, list decoding, polar subcodes.
\end{IEEEkeywords} 

\section{Introduction}
In this paper we consider codes that were firstly introduced as branching-MERA codes \cite{ferris2013branching} and then as convolutional polar codes (CvPCs) \cite{ferris2017convolutional} by A.~J.~Ferris, C.~Hirche and D.~Poulin.
These codes were shown to provide substantially better performance under successive cancellation (SC) decoding compared to classical  polar codes \cite{arikan2009channel}.
In \cite{ferris2017convolutional} both open-boundary and periodic-boundary CvPCs are presented, in this paper by CvPCs we always mean open-boundary CvPCs.
In \cite{morozov2018efficient} the efficient min-sum implementation of SC decoding is presented for CvPCs, which requires one to perform only comparisons and additions and can be easily extended to the case of SC list (SCL) decoding.
Other implementations of SCL decoding for CvPCs are presented in \cite{saber2018convolutional, prinz2018successive}.

Classical polar codes provide quite poor performance under SCL decoding due to very low minimum distance, which scales as $O(\sqrt{n})$ \cite{hussami2009performance}.
Although the minimum distance of a polar code can be found simply, the problem of computing minimum distance of an arbitrary linear code is NP-complete. However, for moderate-length codes minimum distance can be obtained by method presented in \cite{canteaut98new}.

The generator matrix of a CvPC consists of rows of $n\times n$ non-singular matrix $Q^{(n)}$, called convolutional polarizing transformation (CvPT).  In this paper we derive a tight lower bound on the minimum distance of CvPCs, based on computing the minimum weight of a coset, given by the $i$-th row of CvPT, of a linear code, generated by the last $n-i-1$ rows of CvPT.
The weight enumerator polynomial of such coset can be expressed as $A_i(x)-A_{i+1}(x)$, where $A_i(x)$ is a weight spectrum of code generated by the last $n-i$ rows of matrix $Q^{(n)}$.
In the case of polar codes, an efficient method for approximate enumerator evaluation is available \cite{li2012adaptive}.
However, for convolutional polar codes there are no methods for evaluation of coset enumerator.

The minimum distance of CvPCs appears to be of the same order as in the case of classical polar codes.
However, by generalizing the construction of randomized polar subcodes \cite{trifonov2017randomized} to the case of CvPC, we obtain convolutional polar subcodes (CvPSs) with reduced error coefficient, which provide superior performance under SCL decoding, compared to polar subcodes.

The paper is organized as follows. In Section~\ref{s:bg} we introduce representation of linear block codes, which is natural for the cases of Arikan and convolutional polar codes.
The concepts of generalized cosets and recoverable vectors are introduced in Section~\ref{s:lower} and are used to obtain a lower bound on the minimum distance of linear block codes.
An efficient algorithm for computing the lower bound in the case of CvPC is provided in Section~\ref{s:mindist}. This algorithm is aimed to explore some properties of low-weight codewords of CvPC.
These properties are used for a construction of  convolutional polar subcodes, which is proposed in Section~\ref{s:subcodes}.
The performance of the proposed code construction is presented in Section~\ref{s:perf}.


\section{Background}
\label{s:bg}
\subsection{Notations}
The following notations are used throughout the paper. $\bF$ denotes the Galois field of two elements. For integer $n$ we denote $[n]=\{0,1,\ldots n-1\}$. For vector $a$ symbol $a_b^c=(a_b,a_{b+1},\ldots, a_c)$.
For two vectors $a$ and $b$ we denote their concatenation by $(a,b)$. 
For $m \times\ n$ matrix  $A$ and sets $\mX \subseteq [m]$, $\mY \subseteq [n],$ by $A_{\mX,\mY}$ we denote the submatrix of $A$ with rows with indices from set $\mX$ and columns with indices from set $\mY$, indexing of rows and columns starts with zero.
Similar notations are applied to vectors as well.
If  $\mX=*$ or $\mY=*$, this means that all rows or all columns of the original matrix are in the submatrix. 
Furthermore, $A_{\overline \mX,\overline \mY}$ denotes submatrix of $A$ consisting  of rows and columns with indices that are not in $\mX$ and $\mY$, respectively.
The vector of $i$ zeroes is denoted by $\bnull^i$, or just by $\bnull$ if $i$ is clear from the context.


\subsection{A Representation of a Linear Block Code and Successive Cancellation Decoding}
Consider binary linear block code in the form
\begin{align}
\set{u_0^{n-1}G^{(n)}\big\vert u_{\mI}\in \bF^k, u_{\mF}=\bnull}, \mI \subseteq [n], |\mI|=k,
\label{eq:polarlike}
\end{align}
where $G^{(n)}$ is an $n \times n$ non-singular binary matrix, $\mI$ is called information set and $\mF=[n]\setminus\mI$ is called frozen set.
The generator matrix of such code is $G^{(n)}_{\mI,*}$.
Note that any $(n,k)$ linear code with generator matrix $G$ can be expressed as in \eqref{eq:polarlike} with $G^{(n)}$, such that $G=G^{(n)}_{\mI,*}$ for some $\mI\subseteq [n]$.
For example, classical polar codes \cite{arikan2009channel} have $G^{(n)}=F^{\otimes m}$ for $n=2^m$.

For such code representation, the successive cancellation (SC) decoding method can be defined. Consider transmission of codeword $c_0^{n-1}=u_0^{n-1}G^{(n)}$ through binary-input memoryless channel $\mW:\bF\to\mY$.
Let $y_0^{n-1}$ be the output of this channel.
After demodulation, the probabilities $W(c_i|y_i)=\mW(y_i|c_i)/\left(\mW(y_i|0)+\mW(y_i|1)\right)$ for $c_i \in \bF$ are provided to the decoding algorithm.
Given the prior hard decisions $\hat u_0\ldots \hat u_{\vp-1}$, at phase $\vp$ the SC decoding algorithm calculates probabilities $W^{(\vp)}_n(\hat u_0^{\vp-1},u_\vp | y_0^{n-1})$, defined as
\begin{align}
W^{(\vp)}_n(u_0^{\vp}|y_0^{n-1})=\sum_{u_{\vp+1}^{n-1}\in\bF^{n-\vp-1}}W^n(u_0^{n-1}G^{(n)}|y_0^{n-1}),
\label{eq:wdef}
\end{align}
where $W^n(c_0^{n-1}|y_0^{n-1})=\prod_{i=0}^{n-1}W(c_i|y_i)$.
The channels $W^{(\vp)}_n:\mY \to \bF^{\vp+1}$ are called \textit{bit subchannels}. Then, the hard decision on  $u_{\vp}$ is made by
\begin{align*}
\hat u_\vp=\begin{cases}
0, &\vp\in\mF \\
\arg \displaystyle\max_{u_{\vp}\in\bF}W^{(\vp)}_n(\hat u_0^{\vp-1},u_\vp|y_0^{n-1}) , &\vp \notin \mF.
\end{cases}
\end{align*}

The SC decoding can be defined for any linear code, if an efficient  method for computing  $W^{(\vp)}_n(u_0^{\vp}|y_0^{n-1})$ is available. However, SC decoding can provide reasonable performance only for codes with $G^{(n)}$, such that the capacities of bit subchannels $W^{(\vp)}_n$ polarize, i.e. converge to $0$ or $1$ with $n \to \infty$.




\subsection{Convolutional Polar Codes}
Convolutional polar codes \cite{ferris2017convolutional} (CvPCs) are a family of linear block codes, for which $G^{(n)}$, $n=2^m$, is equal to the matrix of convolutional polarizing transformation (CvPT) $Q^{(n)}$, such that 
\begin{align}
Q^{(n)}=\left(X^{(n)}Q^{(n/2)},Z^{(n)}Q^{(n/2)}\right),
\label{eq:cptxz}
\end{align}
where $Q^{(1)}=(1)$, $X^{(l)}$ and $Z^{(l)}$ are $l\times l/2$ matrices, defined for even $l$ as
\begin{align}
\label{eq:xdef}
&X^{(l)}_{i,j} =\begin{cases} 1, & \text{if } 2j\leq i \leq 2j+2\\
0, & \text{otherwise}
\end{cases}\\
\label{eq:zdef}
&Z^{(l)}_{i,j} =\begin{cases} 1, & \text{if } 2j< i \leq 2j+2\\
0, & \text{otherwise}
\end{cases}
\end{align}
For example, 
$X^{(4)}=\begin{pmatrix}1110\\0011\end{pmatrix}^T$,
$Z^{(4)}=\begin{pmatrix}0110\\0001\end{pmatrix}^T$.
Expansion \eqref{eq:cptxz} corresponds to one \textit{layer} of CvPT.
In Fig.~\ref{fig:cpt}, the $m$-th layer of CvPT is a mapping of vector $u_0^{n-1}$ to vectors $x_0^{n/2-1}=u_0^{n-1}X^{(n)}$ and $z_0^{n/2-1}=u_0^{n-1}Z^{(n)}$.

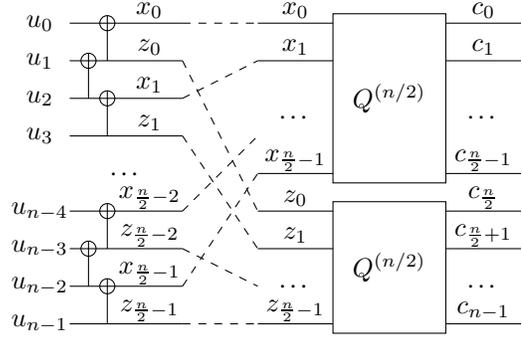
\begin{figure}
\centering
\begin{tikzpicture}[x=0.5cm,y=0.5cm]

\def\r{0.2}
\def\g{0.35}


\foreach \x in {1,...,4,6,7,...,9}{
  \draw (1,\x)--(4,\x);
}
\foreach \x in {1,3,4,5,8,9}{
\draw (6,\x)--(8,\x);
}
\foreach \x in {1,3,4,5,8,9}{
\draw (11,\x)--(13,\x);
}


\foreach \x in {2,7}{
\draw (1.5,\x)--(1.5,\x+1+\r);
\draw (1.5,\x+1) circle (\r);
}

\foreach \x in {1,3,8,6}{
\draw (2,\x)--(2,\x+1+\r);
\draw (2,\x+1) circle (\r);
}


\draw[dashed] (4,9)--(6,9);
\draw[dashed] (4,8)--(6,4);
\draw[dashed] (4,7)--(6,8);
\draw[dashed] (4,6)--(6,3);

\draw[dashed] (4,2)--(6,5);
\draw[dashed] (4,1)--(6,1);
\draw[dashed] (4,3)--(6,2);
\draw[dashed] (4,4)--(6,6);


\draw (8,0.75) rectangle (11,4.25);
\draw (8,4.75) rectangle (11,9.25);


\node at (2.5,5) {\ldots};
\node at (7,6.5) {\ldots};
\node at (7,2) {\ldots};
\node at (12,6.5) {\ldots};
\node at (12,2) {\ldots};

\foreach \x in {0,...,3}{
  \node at (0.2,9-\x) {$u_\x$};
}
\foreach \x in {0,1}{
\node at (3.125,9+\g-2*\x) {$x_\x$};
\node at (3.125,8+\g-2*\x) {$z_\x$};
}
\foreach \x in {1,...,4}{
  \node at (0.2,\x) {$u_{n-\x}$};
}
\foreach \x in {1,2}{
\node at (3.125,-1+\g+2*\x) {$z_{\frac{n}{2}-\x}$};
\node at (3.125,\g+2*\x) {$x_{\frac{n}{2}-\x}$};
}

\node at (7,1+\g) {$z_{\frac{n}{2}-1}$};
\node at (7,3+\g) {$z_{1}$};
\node at (7,4+\g) {$z_{0}$};
\node at (7,5+\g) {$x_{\frac{n}{2}-1}$};
\node at (7,8+\g) {$x_1$};
\node at (7,9+\g) {$x_0$};

\node at (9.5,7) {$Q^{(n/2)}$};
\node at (9.5,2.5) {$Q^{(n/2)}$};

\node at (12,1+\g) {$c_{n-1}$};
\node at (12,3+\g) {$c_{\frac{n}{2}+1}$};
\node at (12,4+\g) {$c_{\frac{n}{2}}$};
\node at (12,5+\g) {$c_{\frac{n}{2}-1}$};
\node at (12,8+\g) {$c_1$};
\node at (12,9+\g) {$c_0$};

\end{tikzpicture}
\caption{Convolutional polarizing transformation $Q^{(n)}$.}
\label{fig:cpt}
\end{figure}
It is shown in \cite{morozov2018efficient} that for $n=2^m$, $\vp \in [n]$, the value of $W^{(\vp)}_n(u_0^{\vp}|y_0^{n-1})$ for CvPT can be recursively computed as
\ifonecol
\begin{align}
&W^{(2\psi)}_n(u_0^{2\psi}|y)=\sum_{w}W^{(\psi)}_{n/2}\left((u_0^{2\psi},w)X^{(2\psi+2)}\big|y'\right)
W^{(\psi)}_{n/2}\left((u_0^{2\psi},w)Z^{(2\psi+2)}|y''\right)
\label{eq:w2i}
\\
&W^{(2\psi+1)}_n(u_0^{2\psi+1}|y)=
\sum_{u_{2\psi+2},w}  W^{(\psi+1)}_{n/2}\!\!\left((u_0^{2\psi+2},w)X^{(2\psi+4)}\big|y'\right)
W^{(\psi+1)}_{n/2}\!\left((u_0^{2\psi+2},w)Z^{(2\psi+4)}\big|y''\right)
\label{eq:w2i1}
\\
&W^{(n-1)}_n(u_0^{n-1}|y)=W^{(n/2-1)}_{n/2}\left(u_0^{n-1}X^{(n)}\big|y'\right)
W^{(n/2-1)}_{n/2}\left(u_0^{n-1}Z^{(n)}\big|y''\right)
\label{eq:wn-1}
\end{align}
\else
\begin{align}
&W^{(2\psi)}_n(u_0^{2\psi}|y)=\sum_{w}W^{(\psi)}_{n/2}\left((u_0^{2\psi},w)X^{(2\psi+2)}\big|y'\right)
\nonumber \\&\times
W^{(\psi)}_{n/2}\left((u_0^{2\psi},w)Z^{(2\psi+2)}|y''\right)
\label{eq:w2i}
\\
&W^{(2\psi+1)}_n(u_0^{2\psi+1}|y)=\!\!\!
\sum_{u_{2\psi+2},w}\!\! W^{(\psi+1)}_{n/2}\!\!\left((u_0^{2\psi+2},w)X^{(2\psi+4)}\big|y'\right)
\nonumber \\&\times
W^{(\psi+1)}_{n/2}\!\left((u_0^{2\psi+2},w)Z^{(2\psi+4)}\big|y''\right)
\label{eq:w2i1}
\\
&W^{(n-1)}_n(u_0^{n-1}|y)=W^{(n/2-1)}_{n/2}\left(u_0^{n-1}X^{(n)}\big|y'\right)
\nonumber \\&\times
W^{(n/2-1)}_{n/2}\left(u_0^{n-1}Z^{(n)}\big|y''\right)
\label{eq:wn-1}
\end{align}
\fi
for $0\leq \psi<n/2-1$, where $y=y_0^{n-1}$, and $y'=y_{0}^{n/2-1}$, $y''=y_{n/2}^{n-1}$ are subvectors of $y$.
These formulae are the same as in \cite{morozov2018efficient} under permutation of the output vector $y$ by the bit-reversal permutation, which is omitted from the definition \eqref{eq:cptxz} of CvPT for the sake of simplicity.

\section{A Lower Bound on The Minimum Distance of Linear Codes}
\label{s:lower}

\subsection{Basic Definitions}

Let $\bS_n$ be the set of all linear subspaces of $\bF^n$.

Denote $a_0^{l-1} \bigcdot b_0^{l-1}=\sum_{i=0}^{l-1} a_ib_i$, where $a_i, b_i\in \bF$. For vectors $b^{(0)}, \dots, b^{(l-1)}\in\bF^t$, denote by $\la b^{(0)}, \dots, b^{(l-1)} \ra$ the linear subspace of $\bF^t$ with basis vectors $b^{(i)}$, i.e. 
\begin{align*}
\hull{b^{(0)}, \dots, b^{(l-1)}}= \set{\sum_{i=0}^{l-1}a_i b^{(i)}\left|\right. a_0^{l-1} \in \bF^l}.
\end{align*}
A sum over an empty set is assumed to be equal to zero, which implies $\hull{}= \set{\bnull^t}$, where $t$ is clear from the context.
By abuse of notation, we write $x_0x_1\dots x_{t-1}$ for $x_i\in\bF$ to denote a vector $(x_0,x_1,\dots,x_{t-1})\in\bF^t$.
\begin{example}
It can be seen that $\bS_2=\set{\hull{}, \hull{10}, \hull{01},\hull{11}, \hull{10,01}}$, and $|\bS_3|=16$.
\end{example}

\subsection{Outline of the Approach}
Consider a code $\mP$ in the form \eqref{eq:polarlike} with $\mF=[\vp]$, i.e. the set of vectors $(\bnull^\vp,u_{\vp}^{n-1})G^{(n)}$.
Code $\mP$ can be split in two sets corresponding to each value of $u_{\vp}$.
Namely, $\mP=\mP_0\cup \mP_1$, where $\mP_a$ consists of all codewords of the form $(\bnull^\vp,a,u_{\vp+1}^{n-1})G^{(n)}$.
These subsets are equal to the subsets, which probabilities are computed at the $\vp$-th phase of the SC decoding algorithm by \eqref{eq:wdef}, provided that the estimated symbols $\hat u_0^{\vp-1}$ are zero.
Since we are interested in distance properties of the code, we can assume that $\hat u_0^{\vp-1}=\bnull^\vp$.  

Let  $d_n^{(\vp)}$ be the distance between $\mP_0$ and $\mP_1$, i.e. $d_n^{(\vp)} = \min_{\dot c \in \mP_0, \ddot c \in \mP_1} \wt(\dot c + \ddot c)$.
Consider $\dot c$ and $\ddot c$, for such the minimum is achieved, i.e., $\dot c=(\bnull^\vp,0,\dot u_{\vp+1}^{n-1})G^{(n)}$, $\ddot c=(\bnull^\vp, 1,\ddot u_{\vp+1}^{n-1})G^{(n)}$, such that $d_n^{(\vp)}=\wt(\dot c+\ddot c)=\wt(\tilde c)$.
Note that $\tilde c=(\bnull^\vp, 1,\dot u_{\vp+1}^{n-1}+\ddot u_{\vp+1}^{n-1})G^{(n)}$ corresponds to value $u_{\vp}=1$, so $\tilde c \in \mP_1$. 
Hence, $d^{(\vp)}_n$ is equal to the weight of a minimum-weight codeword from $\mP_1$. In general, we can say that if $\hat u_0^{\vp-1}=u_0^{\vp-1}$, i.e., all previous symbols are estimated correctly, then the probability of erroneous estimation of $u_{\vp}$ in the case of transmission over sufficiently good binary memoryless channel is mainly defined by $d^{(\vp)}_n=\min_{c \in \mP_1}\wt(c)$. 

In section \ref{ss:gcoset} we consider the partition of $\mP$ in two sets $\mP'_0$ and $\mP'_1$ not by the value of $u_{\vp}$, but by the value of some linear combination $p_0^{j-1} \bigcdot u_{\vp}^{\vp+j-1}$ of symbols $u_{\vp}^{\vp+j-1}$.
Thus, set $\mP'_a$, $a\in\bF$ consists of all codewords $(\bnull^\vp, u_{\vp}^{n-1})G^{(n)}$ satisfying $p_0^{j-1} \bigcdot u_{\vp}^{\vp+j-1}=a$.

In section \ref{ss:recoverable} we consider  transmission of codewords through binary erasure channel (BEC) $W:\bF\to\bF\cup\set{\epsilon}$, defined as $W(x|x)=1-p_\epsilon$, $W(\epsilon|x)=p_{\epsilon}$, where $p_{\epsilon}$ is the erasure probability.
We consider  mapping of the set of erased symbols $\mE \subseteq [n]$ to the set of all linear combinations of symbols $u_{\vp}^{\vp+j-1}$, which can be recovered by the receiver by given $c_{\ove}=(c_i)_{i\notin \mE}$.
Thus, we consider a set $s\subseteq\ \bF^j$ of all vectors $p_0^{j-1} \in \bF^j$, such that the value of corresponding linear combination $p_0^{j-1}\bigcdot u_{\vp}^{\vp+j-1}$ can be recovered by receiver after erasure configuration $\mE$. It appears that $s \in \bS_j$, i.e. $s$ is a linear subspace of $\bF^j$.
 
In section \ref{ss:coseterasure}, we prove that the minimum weight of vector from $\mP'_1$ (i.e., the distance between $\mP'_0$ and $\mP'_1$) is equal to the minimum number of erasures, such that corresponding subspace $s \in \bS_j$ of coefficients of recoverable linear combinations does \textit{not} include the linear combination with coefficients $p_0^{j-1}$.

These results are combined to derive the algorithm for computing $d_n^{(\vp)}$ in the case of CvPC, which leads to the lower bound on minimum distance of CvPC and the construction of CvPS.
Furthermore, we believe that the introduced concepts and their properties can be used for other $G^{(n)}$ that have recursive structure. 

%
\subsection{Minimum Weight of Cosets and the Minimum Distance}
\label{ss:gcoset}
\begin{definition}
\label{def:coset}
Given an $n \times n$ non-singular matrix $G^{(n)}$, for a vector $p\in\bF^j$ define a generalized coset $\mC^{(\vp)}_n(p)$ as
\begin{align}
\mC^{(\vp)}_{n}(p)=\left\{u_0^{n-1}G^{(n)}| u_0^{\vp-1}=\bnull \; \wedge\; p \bigcdot u_\vp^{\vp +j-1} = 1 \right\},
\label{eq:coset}
\end{align}
\end{definition}
\begin{remark}
In the case of $j>n-\vp$, we assume in \eqref{eq:coset} that $u_l=0$ for $l\geq n$.
\label{rm:edge}
\end{remark}

%
We define \textit{the weight of the $\vp$-th bit subchannel} $W_n^{(\vp)}$ as
\begin{align*}
d^{(\vp)}_n= \min_{c \in \mC^{(\vp)}_n(1)} \wt(c).
\end{align*}
Observe that for all $j>0$ one has $\mC^{(\vp)}_n(p)=\mC^{(\vp)}_n(p,\bnull^j)$, which implies $d^{(\vp)}_n=\displaystyle\min_{c\in\mC^{(\vp)}_n(1, \bnull^j)}\wt(c)$.

\begin{lemma}
\label{lm:dcoset}
If a linear code with minimum distance $d$ is generated by rows of $G^{(n)}$ with indices from $\mI \subseteq [n]$, then 
\begin{align}
d \ge \min_{\vp\in\mI} d^{(\vp)}_n.
\label{eq:dbeta}
\end{align}
\end{lemma}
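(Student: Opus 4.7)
The plan is to show that every non-zero codeword of the code lies in some generalized coset $\mC^{(\vp)}_n(1)$ with $\vp\in\mI$, and hence has weight at least $d^{(\vp)}_n$.

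First I would take an arbitrary non-zero codeword $c$ of the code generated by $G^{(n)}_{\mI,*}$. By definition, $c = u_0^{n-1} G^{(n)}$ for some coefficient vector $u_0^{n-1}$ satisfying $u_i = 0$ for every $i \notin \mI$, and since $c \neq \bnull$ and $G^{(n)}$ is non-singular, the vector $u_0^{n-1}$ itself is non-zero. Therefore the index set $\{i \in \mI : u_i = 1\}$ is non-empty, and I can let $\vp$ be its minimum element.

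Next I would verify that $c \in \mC^{(\vp)}_n(1)$ in the sense of Definition~\ref{def:coset} with $p=(1)\in\bF^1$. For this I need to check that $u_0^{\vp-1}=\bnull$ and that $p \bigcdot u_\vp^\vp = u_\vp = 1$. The second identity is immediate from the choice of $\vp$. For the first, any index $i < \vp$ either satisfies $i \notin \mI$, in which case $u_i=0$ by the frozen-set condition, or $i \in \mI$, in which case $u_i = 0$ by minimality of $\vp$. Hence $c \in \mC^{(\vp)}_n(1)$, and by the definition of $d^{(\vp)}_n$ we obtain
\begin{align*}
\wt(c) \ge d^{(\vp)}_n \ge \min_{\vp'\in\mI} d^{(\vp')}_n .
\end{align*}
Since $c$ was an arbitrary non-zero codeword, taking the minimum over all such $c$ yields the desired inequality~\eqref{eq:dbeta}.

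There is no substantial obstacle here; the argument is purely structural and relies only on identifying the first non-trivial coordinate of $u_0^{n-1}$ relative to $\mI$. The only point requiring care is ensuring that this first index genuinely lies in $\mI$ (so that it is admissible in the minimum on the right-hand side of~\eqref{eq:dbeta}), which is precisely where the assumption $u_{\mF}=\bnull$ is used.
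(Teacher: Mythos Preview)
Your proof is correct and follows essentially the same approach as the paper's: identify the first non-zero coordinate $\psi$ of $u_0^{n-1}$, observe that $\psi\in\mI$, and conclude $c\in\mC^{(\psi)}_n(1)$ so that $\wt(c)\ge d^{(\psi)}_n$. The only cosmetic difference is that the paper works directly with a minimum-weight codeword, whereas you bound the weight of an arbitrary non-zero codeword and then minimize.
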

\begin{proof}
Consider the minimum-weight codeword $c_0^{n-1}=u_0^{n-1}G^{(n)}$, $\wt (c_0^{n-1})=d$.
Let $\psi$ be the first position of non-zero element in $u_0^{n-1}$.
Thus, $\psi \in \mI$, $u_\psi=1$, $u_0^{\psi-1}=\bnull$, which implies $c_0^{n-1}\in \mC^{(\psi )}_n(1)$ and $d=\wt(c_0^{n-1})\geq d^{(\psi)}_n\geq \min_{\vp\in\mI} d^{(\vp)}_n$.
\end{proof}
This bound is valid for any linear block code represented in the form of \eqref{eq:polarlike}. However, the evaluation of $d^{(\vp)}_n$ is not a simple problem for an arbitrary $G^{(n)}$. 
\subsection{Recoverable and erased vectors}
\label{ss:recoverable}

Consider transmission of  a codeword $c_0^{n-1}=u_0^{n-1}G^{(n)}$ of a code with frozen set $\mF=[\varphi]$, $u_0^{\vp-1}=\bnull
$ and dimension $k=n-\varphi$ over BEC.

The set of erased positions $\mE \subseteq [n]$ is called an \textit{erasure configuration}.
When erasure configuration $\mE$ occurs, the values $c_{\ove}=u_{\varphi}^{n-1}\hat G$ are available for the receiver, where $\hat G = G^{(n)}_{\overline{[\varphi]},\ove}$ is $k \times r$ submatrix of $G^{(n)}$ without rows from $[\vp]$ and without columns from $\mE$, $r=n-|\mE|$.
Denote by $\mU$ the set of all $\hat u_{\vp}^{n-1}$ such that $\hat u_{\vp}^{n-1}\hat G=c_{\ove}$. 
One can see  that
\begin{align}
\mU=\set{u_{\vp}^{n-1}+a_0^{k-1} \big | a_0^{k-1}\in\cs^{\perp}(\hat G)},
\label{eq:uset}
\end{align}
where for set of vectors $\mA \subseteq \bF^t$, by $\mA^\perp \subseteq \bF^t$ we denote the set of  vectors $x_0^{t-1}:\forall y_0^{t-1} \in \mA:x_0^{t-1} \bigcdot y_0^{t-1}=0$, and $\cs(A)$ is the column space of matrix $A$.
The value $u_{\vp}^{n-1}$ can be unambiguously recovered by the receiver after  erasure configuration $\mE$ iff $|\mU|=1$, i.e. $\mU=\set{u_{\vp}^{n-1}}$.

More generally, consider the recoverability of the value of a linear combination $p_{0}^{k-1}\bigcdot u_{\vp}^{n-1}$ after erasure configuration $\mE$. The set of values of  $p_{0}^{k-1}\bigcdot \hat u_{\vp}^{n-1}$ for all $\hat u_{\vp}^{n-1} \in \mU$ is given by
\begin{align}
\set{p_{0}^{k-1} \bigcdot(u_{\vp}^{n-1}+a_0^{k-1}) \;\big|\; a_0^{k-1}\in \cs^\perp(\hat G)}.
\label{eq:pset}
\end{align}
We say that vector $p_{0}^{k-1}$ is $(\mE,\varphi)$-\textit{recoverable}, if the corresponding linear combination $p_{0}^{k-1}\bigcdot u_{\vp}^{n-1}$ can be recovered unambiguously for given $c_{\ove}$, i.e., the set \eqref{eq:pset} contains only the correct value $p_{0}^{k-1} \bigcdot  u_{\vp}^{n-1}$.
Expanding the brackets in \eqref{eq:pset}, one can see that $p_{0}^{k-1}$ is $(\mE,\varphi)$-recoverable iff $\forall a_0^{k-1} \in \cs^\perp(\hat G): p_{0}^{k-1}\bigcdot a_0^{k-1}=0$, which leads to $p_0^{k-1} \in {\cs^\perp}^\perp(\hat G) =\cs(\hat G)$.
Thus, the set of $(\mE,\varphi)$-recoverable vectors is a linear space, which is equal to
$\cs(\hat G)\in \bS_k$.

\begin{definition}
\label{df:eij}
Let $s\in \bS_j$ be the space of all $p_0^{j-1}$, such that $(p_0^{j-1},\bnull^{k-j})$ is $(\mE,\vp)$-recoverable. In this case, $s$ is called a $(\mE,\vp,j)$-\textit{space} and is denoted by $\chi^{(\vp,j)}_n(\mE)$, and $\mE$ is called an $(s,\vp,j)$-\textit{configuration}.
The set of $(s,\vp,j)$-configurations is denoted by $\xi^{(\vp,j)}_n(s)$.
Thus,
\begin{align}
\chi^{(\vp,j)}_n(\mE)&=\set{p_0^{j-1}\;\big\vert\;(p_0^{j-1},\bnull^{k-j}) \in\cs\left(G^{(n)}_{\overline{[\varphi]},\ove}\right)},
\label{eq:chidef}
\\
\xi^{(\varphi,j)}_{n}(s)&= \set{\mE\;\big\vert\; \chi^{(\vp,j)}_n(\mE)=s}. \label{eq:xidef}
\end{align}
If $\mA$ is a set, denote by $2^\mA$ the set of all subsets of $\mA$.
Thus, function $\chi_n^{(\vp,j)}:2^{[n]}\to \bS_j$, maps an erasure configuration, which is a subset of $[n]$, to a linear subspace of $\bF^j$, and $\xi^{(\varphi,j)}_{n}$ returns the inverse image of $\chi_n^{(\vp,j)}$. Note that $\chi_n^{(\vp,j)}$ is not injective, so $\xi^{(\varphi,j)}_{n}:\bS_j \to 2^{2^{[n]}}$.
\end{definition}

In words, $\chi_n^{(\vp,j)}(\mE)$ defines the set of vectors $p_0^{j-1}$, for which the value of linear combination $p_0^{j-1} \bigcdot u_{\vp}^{\vp+j-1}$ can be recovered after erasure configuration $\mE$, provided that $u_0^{\vp-1}=\bnull$.
Conversely, $\xi^{(\varphi,j)}_{n}(s)$ defines the set of erasure configurations, after which the linear combination $p_0^{j-1} \bigcdot u_{\vp}^{\vp+j-1}$ can be deduced by the receiver if \textit{and only if} $p \in s$.

\begin{remark}
\label{rm:chiedge}
Let $j>k$, i.e. $j=k+h$ for some $h>0$.
In this case, the conditional part of definition \eqref{eq:chidef} is inconsistent.
We extend the definition as follows.
In Remark~\ref{rm:edge} we assume that symbols $u_{n+h}$ for $h \geq 0$ are equal to zero. Hence, these symbols are always perfectly known for the receiver, so any $\mE$ does not erase any symbol $u_{n+h}$. 
 Observe that any vector from $\bF^j\setminus \chi^{(\vp,j)}_n(\mE)$ must be \textit{not} $(\mE,\vp)$-recoverable, so for any $\mE$ and $q_0^{h-1}\in\bF^h$, we must include vector $(\bnull^k,q_0^{h-1})$ in the set $\chi^{(\vp,k+h)}_n(\mE)$. This leads to
\begin{align*}
\chi_n^{(\vp,k+h)}(\mE)=\set{(p,q) \;\big\vert\; p\in \chi_n^{(\vp,k)}(\mE),q\in\bF^ h}.
\end{align*}
Similarly, we assume that $\xi^{(\vp,k+h)}_n(s)=\emptyset$ for all $s$ which do not contain $(\bnull^k,q)$ for some $q\in\bF^h$.
\end{remark}

\begin{example} Consider $(s,0,2)$-configurations for the case of $n=2$, $c_0^1=u_0^1Q^{(2)}=(u_0+u_1,u_1)$.
For erasure configuration $\mE=\{0\}$, the only non-zero vector which is $(\mE,0)$-recoverable is $p=(0,1)$. That is, if symbol $c_0$  is erased, one can recover unambiguously only $u_1=c_1$. This means that  $\{0\} \in \xi^{(0,2)}_2(\hull{01})$. All $(s,0,2)$-configurations are
\ifonecol
\begin{align}
\xi^{(0,2)}_2(\hull{01})=\{\{0\}\}, \xi^{(0,2)}_2(\hull{10})=\emptyset, \xi^{(0,2)}_2(\hull{11})=\set{\set{1}},
\xi^{(0,2)}_2(\hull{})=\set{\set{0,1}}, \xi^{(0,2)}_2(\bF^2)=\{\emptyset\}.
\label{eq:c2}
\end{align}
\else
\begin{align}
&\xi^{(0,2)}_2(\hull{01})=\{\{0\}\}, \xi^{(0,2)}_2(\hull{10})=\emptyset, \xi^{(0,2)}_2(\hull{11})=\set{\set{1}},\nonumber\\
&\xi^{(0,2)}_2(\hull{})=\set{\set{0,1}}, \xi^{(0,2)}_2(\bF^2)=\{\emptyset\}.
\label{eq:c2}
\end{align}
\fi
That is, there are no erasure configurations, such that only $\langle 10\rangle$ (i.e. symbol $u_0$) is unambiguously recoverable, and the whole vector $u_0^1$ can be unambiguously recovered only if there are no erasures.
For the same case, the $(\mE,0,2)$-spaces are
\ifonecol
\begin{align*}
\chi_2^{(0,2)}(\emptyset)=\bF^2,\chi_2^{(0,2)}(\set{0})=\hull{01}, \chi_2^{(0,2)}(\set{1})=\hull{11}, \chi_2^{(0,2)}(\set{0,1})=\hull{}.
\end{align*}
\else
\begin{align*}
\chi_2^{(0,2)}(\emptyset)=\bF^2,\chi_2^{(0,2)}(\set{0})=\hull{01}, \\
\chi_2^{(0,2)}(\set{1})=\hull{11}, \chi_2^{(0,2)}(\set{0,1})=\hull{}.
\end{align*}
\fi
\end{example}
\begin{example}
Consider the case of $\vp=2$, $j=2$, $n=4$ and
$c_0^3=u_0^3Q^{(4)}=(u_0+u_1+u_3,u_2+u_3,u_1+u_2+u_3,u_3)$.
Since $\vp=2$ implies $u_0^1=\bnull$, one has $c_0=c_3=u_3$, $c_1=c_2=u_2+u_3$ and one can restore $u_3$ by $c_0$ or $c_3$.
Thus, $\xi^{(2,2)}_4(\hull{01})=\set{\set{1,2},\set{0,1,2},\set{1,2,3}}$. 
\end{example}
\subsection{Coset minimum weight and erasure configurations}
\label{ss:coseterasure}

For a subspace $s \in \bS_j$, we denote the minimal cardinality of $(s,\vp,j)$-configuration as
\begin{align}
\delta^{(\vp,j)}_n(s)=\min_{\mE\in\xi^{(\vp,j)}_n(s)}|\mE|,
\label{eq:deltadef}
\end{align}
assuming that the minimum over the empty set is $+\infty$.
\begin{theorem}
\label{tm:betadelta}
Let $\vp\in [n]$ and $j>0$.
For any $p \in \bF^j$,
\begin{align*}
\min_{c\in\mC^{(\vp)}_n(p)}\wt(c)=\min_{s\in \bS_j:p\notin s}\delta^{(\vp,j)}_n(s).
\end{align*}
\end{theorem}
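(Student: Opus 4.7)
The plan is to prove both sides equal a common intermediate quantity — the minimum size of an erasure configuration on $[n]$ that prevents unambiguous recovery of the linear combination indexed by $p$ — via the recoverability characterization from Section~\ref{ss:recoverable}, which identifies the $(\mE,\vp)$-recoverable vectors with $\cs(\hat G)$, where $\hat G = G^{(n)}_{\overline{[\vp]},\ove}$ and $k = n-\vp$.

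Step~1 reduces the left-hand side to
\begin{align*}
\min\bigl\{|\mE|:\mE\subseteq[n],\,p\notin\chi^{(\vp,j)}_n(\mE)\bigr\}
\end{align*}
by translating codeword weights into erasure patterns. For ``$\geq$'', take a minimum-weight $c = u_0^{n-1}G^{(n)} \in \mC^{(\vp)}_n(p)$ and set $\mE = \supp(c)$; then the two information vectors $u_\vp^{n-1}$ and $\bnull$ produce the same $c_{\ove}$ (the all-zero vector) but assign different values, $1$ and $0$, to $p\bigcdot u_\vp^{\vp+j-1}$, so \eqref{eq:chidef} forces $p \notin \chi^{(\vp,j)}_n(\mE)$ and $|\mE|=\wt(c)$. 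For ``$\leq$'', given $\mE$ with $(p,\bnull^{k-j}) \notin \cs(\hat G)$, the duality $\cs^\perp(\hat G)^\perp = \cs(\hat G)$ produces $a \in \cs^\perp(\hat G)$ with $p \bigcdot a_0^{j-1} = 1$; then $c = (\bnull^\vp,a)G^{(n)}$ satisfies $c_{\ove}=a\hat G=\bnull$ and $p\bigcdot a_0^{j-1}=1$, so $c$ lies in $\mC^{(\vp)}_n(p)$ and is supported in $\mE$.

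Step~2 rewrites this intermediate quantity as the stated right-hand side by regrouping erasure configurations according to their image under $\chi^{(\vp,j)}_n$. Every $\mE$ produces a unique $s = \chi^{(\vp,j)}_n(\mE) \in \bS_j$ with $\mE \in \xi^{(\vp,j)}_n(s)$, and $p \notin \chi^{(\vp,j)}_n(\mE)$ is equivalent to $p \notin s$; therefore partitioning the minimisation over $\mE$ by the value of $s$ and applying \eqref{eq:deltadef} delivers $\min_{s \in \bS_j:\,p\notin s} \delta^{(\vp,j)}_n(s)$, with subspaces $s$ satisfying $\xi^{(\vp,j)}_n(s) = \emptyset$ contributing $+\infty$ and so remaining inert in the outer minimum.

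The main obstacle I anticipate is the degenerate regime $j > n-\vp$ flagged in Remarks~\ref{rm:edge} and~\ref{rm:chiedge}: the trailing $h = j-(n-\vp)$ coordinates of $u_\vp^{\vp+j-1}$ are forced to zero and every recoverable subspace is extended to contain $\{\bnull^{n-\vp}\}\times\bF^h$. I would verify separately that (i) $\mC^{(\vp)}_n(p)$ depends only on the first $n-\vp$ entries of $p$ and (ii) on the right-hand side, only subspaces of the form $s'\times\bF^h$ admit configurations, so that the problem truncates cleanly to the non-degenerate instance on $\bF^{n-\vp}$. When the truncated $p$ vanishes (including the case $p=\bnull$), both sides are $+\infty$ — the coset is empty because $p\bigcdot u_\vp^{\vp+j-1}$ is identically $0$, and no linear subspace $s$ satisfies $\bnull\notin s$ — preserving the identity.
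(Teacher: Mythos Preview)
Your proof is correct and follows essentially the same approach as the paper: the paper likewise introduces the intermediate set $\mB=\{\mE:p\notin\chi_n^{(\vp,j)}(\mE)\}$, proves $\min_{\Omega\in\mA}|\Omega|\ge\min_{\mE\in\mB}|\mE|$ by showing $\supp(c)\in\mB$ for any $c\in\mC_n^{(\vp)}(p)$, and proves the reverse inequality by picking $a\in\cs^\perp(\hat G)$ with $p\bigcdot a_0^{j-1}=1$ and exhibiting $\hat c=(\bnull^\vp,a)G^{(n)}$ with $\supp(\hat c)\subseteq\mE$. Your explicit handling of the degenerate regime $j>n-\vp$ and of $p=\bnull$ is a welcome addition that the paper's proof leaves to Remarks~\ref{rm:edge} and~\ref{rm:chiedge}.
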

\begin{proof}
Denote 
\ifonecol
\begin{align*}
\mA=\set{\supp(c) \big\vert c\in\mC^{(\vp)}_n(p)},\mB=\bigcup_{s\in \bS_j:p\notin s}\xi^{(\vp,j)}_n(s)=\bigcup_{s\in \bS_j:p\notin s}\set{\mE\;|\;\chi_n^{(\vp,j)}(\mE)=s}=\set{\mE \; | \; p \notin \chi_n^{(\vp,j)}(\mE)}.
\end{align*}
\else
$\mA=\set{\supp(c) \big\vert c\in\mC^{(\vp)}_n(p)}$,
\begin{align*}
\mB&=
\bigcup_{s\in \bS_j:p\notin s}\xi^{(\vp,j)}_n(s)=
\bigcup_{s\in \bS_j:p\notin s}\set{\mE\;|\;\chi_n^{(\vp,j)}(\mE)=s}
\\
&=\set{\mE \; | \; p \notin \chi_n^{(\vp,j)}(\mE)}.
\end{align*}
\fi
Then the theorem can be reformulated as
\ifonecol
$$\min_{\Omega\in \mA}|\Omega|=\min_{\mE\in \mB}|\mE|.$$
\else
$\min_{\Omega\in \mA}|\Omega|=\min_{\mE\in \mB}|\mE|.$
\fi

If $\Omega \in \mA$, then there exists $u_{\vp}^{n-1}$, such that $p \bigcdot u_{\vp}^{\vp+j-1}=1$ and  $\Omega=\supp(c_0^{n-1})$ for $c_0^{n-1}=(\bnull^\vp, u_{\vp}^{n-1})G^{(n)}$.
In this case $c_{\overline{\Omega}}=\bnull$ and the all-zero value $\hat u_{\vp}^{n-1}=\bnull$ also belongs to set \eqref{eq:uset} of possible values of $u_{\vp}^{n-1}$ for the given $c_{\overline{\Omega}}$, but $p \bigcdot \hat u_{\vp}^{\vp+j-1}=0$.
Thus, the value of  $p \bigcdot u_{\vp}^{\vp+j-1}$ is not recoverable after erasure configuration $\Omega$, which implies $p \notin \chi^{(\vp,j)}_n(\Omega)\implies\Omega \in \mB$.
So, $\Omega\in \mA \implies \Omega \in \mB$ and $\displaystyle\min_{\Omega\in \mA} |\Omega|\geq \min_{\mE\in \mB} |\mE|$. 

If $\mE \in \mB$, then $p \notin \chi^{(\vp,j)}_n(\mE)$, which by Definition~\ref{df:eij} implies $(p,\bnull^{k-j})\notin\cs(\hat G)$ and $\exists a_0^{k-1}\in \cs^\perp(\hat G): (p,\bnull^{k-j})~\bigcdot~a_0^{k-1}=~1$, which implies $p\bigcdot a_0^{j-1}=1$.
Denote $\hat c_0^{n-1}=(\bnull^\vp,a_0^{k-1})G^{(n)}$.
Since $p\bigcdot a_0^{j-1}=1$, by Definition~\ref{def:coset} one has $\hat c_0^{n-1} \in \mC^{(\vp)}_n(p)$, and therefore $\supp(\hat c)\in \mA$.
On the other hand, $\hat c_{\ove}=a_0^{k-1}\hat G=\bnull$, which means $\supp(\hat c)\subseteq \mE$.
So, $\forall \mE \in \mB \;\; \exists \Omega\in \mA: \Omega\subseteq\mE$, hence,  $\displaystyle\min_{\Omega\in \mA} |\Omega|\leq \min_{\mE\in \mB} |\mE|$.
\end{proof}
\begin{corollary}
\label{cr:betadelta}
For any $j>0:$
\ifonecol
$d^{(\vp)}_n=\min\set{\delta^{(\vp,j)}_n(s)\big\vert s \in \bS_j:(1,\bnull^{j-1}) \notin s}.$
\else
$$d^{(\vp)}_n=\min\set{\delta^{(\vp,j)}_n(s)\big\vert s \in \bS_j:(1,\bnull^{j-1}) \notin s}.$$
\fi
\end{corollary}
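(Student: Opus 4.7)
The plan is to derive this as an immediate consequence of Theorem~\ref{tm:betadelta} applied to the specific vector $p=(1,\bnull^{j-1})\in\bF^j$.

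First I would recall the remark made right after the definition of $d^{(\vp)}_n$: for every $j>0$ we have $\mC^{(\vp)}_n(1)=\mC^{(\vp)}_n(1,\bnull^{j-1})$, which was justified by observing that appending zero coordinates to the defining vector $p$ does not change the constraint $p\bigcdot u_\vp^{\vp+j-1}=1$ (the extra positions just contribute zero, and for indices beyond $n-1$ we already agreed in Remark~\ref{rm:edge} that $u_l=0$). Consequently,
\begin{align*}
d^{(\vp)}_n=\min_{c\in\mC^{(\vp)}_n(1)}\wt(c)=\min_{c\in\mC^{(\vp)}_n(1,\bnull^{j-1})}\wt(c).
\end{align*}

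Next I would invoke Theorem~\ref{tm:betadelta} with the specific choice $p=(1,\bnull^{j-1})\in\bF^j$, which rewrites the right-hand side above as
\begin{align*}
\min_{s\in\bS_j:\,(1,\bnull^{j-1})\notin s}\delta^{(\vp,j)}_n(s),
\end{align*}
exactly the claimed expression. Chaining the two equalities yields the corollary.

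There is essentially no obstacle here: the entire content of the corollary sits in Theorem~\ref{tm:betadelta}, and the only thing to check is that we are allowed to pad $p=1\in\bF^1$ with zeros to view it as an element of $\bF^j$ without altering the coset, which is precisely the observation already recorded in the text. The only mild subtlety worth a one-line mention is the edge case $j>n-\vp$, where the padding extends past the last free coordinate; this is handled by the convention of Remark~\ref{rm:edge} and the extension of $\chi_n^{(\vp,j)}$ described in Remark~\ref{rm:chiedge}, so neither side of the equation is affected.
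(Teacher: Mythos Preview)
Your proposal is correct and matches exactly what the paper intends: the corollary is stated without a separate proof, as it is an immediate specialization of Theorem~\ref{tm:betadelta} to $p=(1,\bnull^{j-1})$, using the observation $\mC^{(\vp)}_n(1)=\mC^{(\vp)}_n(1,\bnull^{j-1})$ already recorded in the text.
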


\section{Bound on Minimum Distance of Convolutional Polar Codes}
\label{s:mindist}
The structure of the convolutional polarizing transformation $Q^{(n)}$, $n=2^m$, enables one to compute easily $\delta^{(\vp,j)}_n(s)$,  defined in  \eqref{eq:deltadef}, for $j=3$.
By computing values of $\delta^{(\vp,3)}_n(s)$, one can obtain values of $d^{(\vp)}_n$ by Corollary~\ref{cr:betadelta} and lower bound on minimum distance by Lemma~\ref{lm:dcoset}.

Consider transmission of  $c_0^{n-1}=u_0^{n-1}Q^{(n)}$, such that $u_0^{\vp-1}=\bnull$, through BEC and let the erasure configuration be $\mE$.
The intuition behind recursive computing of $\delta^{(\vp,3)}_n(s)$ is as follows.

\textit{Consider the case of $\vp=2\psi+1<n-1$}.
Denote $x_0^{n/2-1}=u_0^{n-1}X^{(n)}$,  $z_0^{n/2-1}=u_0^{n-1}Z^{(n)}$, $\mE'=\mE\cap[\frac{n}{2}]$, $\mE''=\set{i\geq 0|i+\frac{n}{2}\in\mE}$.
Recall that $\chi^{(2\psi+1,3)}_n(\mE)$ is the set of all $p_0^2$, such that the value of $p_0^2 \bigcdot u_{2\psi+1}^{2\psi+3}$ can be deduced from $c_0^{n-1}$ after erasure configuration $\mE$.
Similarly, $\chi^{(\psi,3)}_{n/2}(\mE')$ and $\chi_{n/2}^{(\psi,3)}(\mE'')$ are the sets of $q_0^2$ and $r_0^2$, s.t. $q_0^2\bigcdot x_{\psi}^{\psi+2}$ and $r_0^2\bigcdot z_{\psi}^{\psi+2}$ are recoverable from $c_0^{n/2-1}$ and $c_{n/2}^{n-1}$ after erasure configurations $\mE'$ and $\mE''$, under assumption  $x_0^{\psi-1}=\bnull$ and $z_0^{\psi-1}=\bnull$, respectively.
By \eqref{eq:xdef}--\eqref{eq:zdef} one obtains $x_i=u_{2i}+u_{2i+1}+u_{2i+2}$ and $z_i=u_{2i+1}+u_{2i+2}$ for $i<\frac{n}{2}-1$, which, together with $u_0^{2\psi}=\bnull$, implies $x_0^{\psi-1}=z_0^{\psi-1}=\bnull$, so the above assumption holds.
Furthermore, since $u_{0}^{n-1}$ was processed by the $m$-th layer of CvPT before the transmission, the value of elements of $u_{2\psi+1}^{2\psi+3}$, as well as the value of any linear combination $p_0^{2} \bigcdot u_{2\psi+1}^{2\psi+3}$, can be deduced only from known linear combinations of elements of $x_{\psi}^{n-1}$ and $z_{\psi}^{n-1}$.
However, for any $x_{\psi+3}^{n/2-1}$, $z_{\psi+3}^{n/2-1}$ and $u_{2\psi+1}^{2\psi+3}$, one can find $u_{2\psi+4}^{n-1}$, such that $(\bnull^{2\psi+1},u_{2\psi+1}^{n-1})=\left(\bnull^\psi, x_{\psi}^{n/2-1},\bnull^\psi,z_{\psi}^{n/2-1}\right)Q^{(n)}$ as follows: set $u_{2i+2}$ to $x_{i+1}+z_{i+1}$ for $i=\frac{n}{2}-2, \ldots,\psi+1$, set $u_{n-1}$ to $z_{n/2-1}$, and set $u_{2i+1}$ to $z_i+u_{2i+2}$ for $i=\frac{n}{2}-2, \ldots,\psi+2$.
So, for any $p \in \bF^3$, even complete knowledge of $x_{\psi+3}^{n/2-1}$ and $z_{\psi+3}^{n/2-1}$  does not provide the value $p \bigcdot u_{2\psi+1}^{2\psi+3}$.
Thus, recoverable linear combinations $q_0^2\bigcdot x_{\psi}^{\psi+2}$ and $r_0^2\bigcdot z_{\psi}^{\psi+2}$ contain all information about recoverable linear combinations $p_0^2\bigcdot u_{2\psi+1}^{2\psi+3}$, and therefore $\chi^{(2\psi+1,3)}_n(\mE)$ can be uniquely deduced from given $\chi^{(\psi,3)}_{n/2}(\mE')$ and $\chi^{(\psi,3)}_{n/2}(\mE'')$.
The similar consideration for $\vp=2\psi+2$ leads to the fact that $\chi^{(2\psi+2,3)}_n(\mE)$ can also be deduced from $\chi^{(\psi,3)}_{n/2}(\mE')$ and $\chi^{(\psi,3)}_{n/2}(\mE'')$.

Let $\psi=\floor{\frac{\vp-1}{2}}$, $\bS_3=\{\mT_i\}_{i=0}^{15}$.
For any $l\in[16]$, consider $(\mT_l, \vp,3)$-erasure configuration $\mE$ for which the minimum in \eqref{eq:deltadef} is achieved, i.e. $\chi^{(\vp,3)}_n(\mE)=\mT_l$ and $|\mE|=\delta_n^{(\vp,3)}(\mT_l)$.
Obviously, $|\mE|=|\mE'|+|\mE''|$.
Let $\chi^{(\psi,3)}_{n/2}(\mE')=\mT_i$, $\chi^{(\psi,3)}_{n/2}(\mE'')=\mT_j$.
Then, $\mE'$ and $\mE''$ are also the minimum-weight $(\mT_i,\psi,3)$- and $(\mT_j,\psi,3)$- erasure configurations, respectively, i.e. $|\mE'|=\delta_{n/2}^{(\psi,3)}(\mT_i)$, and $|\mE''|=\delta_{n/2}^{(\psi,3)}(\mT_j)$.
We know that $\mT_l$ can be deduced from $\mT_i$ and $\mT_j$, i.e., for each $\vp$ and $n$ there is a function $\bfT^{(\vp)}_n(i,j)$, which returns $\mT_l$ for given  $i$ and $j$, and for considered minimum-weight $\mE$, $\mE'$, $\mE''$ one can obtain 
$\delta^{(\vp,3)}_n(\bfT^{(\vp)}_n(i,j))=\delta^{(\psi,3)}_{n/2}(\mT_i) +\delta^{(\psi,3)}_{n/2}(\mT_j)$.

It appears that $\bfT^{(\vp)}_n=\bfT^{(\vp')}_{n'}$ if $\vp \equiv\vp'\mod 2$, i.e., there are only two different functions $\bfT^{(\vp)}_n$: one for odd $\vp$ and another one for even $\vp$.
They are defined as $\bfT_o,\bfT_e:[16]\times[16]\to \bS_3$, such that
\ifonecol
\begin{align}
\bfT_o(i,j)=\set{p_0^2\big\vert \exists p'\in \mT_i,p''\in \mT_j:(p_0^2,0,0)^T=X^{(6)}_{\overline{[1]},*}p'^T + Z^{(6)}_{\overline{[1]},*}p''^T}
\label{eq:to}
\\
\bfT_e(i,j)=\left\{p_0^2\big\vert \exists p'\in \mT_i,p''\in \mT_j:(p_0^2,0)^T= X^{(6)}_{\overline{[2]},*}p'^T+Z^{(6)}_{\overline{[2]},*}p''^T \right\}.
\label{eq:te}
\end{align}
\else
\begin{align}
\bfT_o(i,j)=&\{p_0^2\big\vert \exists p'\in \mT_i,p''\in \mT_j:\nonumber\\
&(p_0^2,0,0)^T=X^{(6)}_{\overline{[1]},*}p'^T + Z^{(6)}_{\overline{[1]},*}p''^T\}
\label{eq:to}
\\
\bfT_e(i,j)=&\{p_0^2\big\vert \exists p'\in \mT_i,p''\in \mT_j:\nonumber\\
&(p_0^2,0)^T= X^{(6)}_{\overline{[2]},*}p'^T+Z^{(6)}_{\overline{[2]},*}p''^T\}.
\label{eq:te}
\end{align}
\fi
The above consideration form the following theorem.
\begin{theorem}
\label{tm:mw}
Denote  $\Delta^{(\vp)}_{n,l}=\delta^{(\vp,3)}_n(\mT_l)$ for $l\in[16]$, $n=2^m$. Then, for a CvPT
\ifonecol
\begin{align}
\Delta^{(2\psi+1)}_{n,l}&=\min_{i,j}\set{\Delta^{(\psi)}_{n/2,i}+\Delta^{(\psi)}_{n/2,j} \big\vert \bfT_o(i,j)=\mT_l}, 0\leq \psi<\frac{n}{2}
\label{eq:deltao} \\
\Delta^{(2\psi+2)}_{n,l}&=\min_{i,j}\set{\Delta^{(\psi)}_{n/2,i}+\Delta^{(\psi)}_{n/2,j} \big\vert \bfT_e(i,j)=\mT_l},-1\leq \psi<\frac{n}{2}-1
\label{eq:deltae}
\end{align}
The base of the recursion is
\begin{align}
&\delta^{(0,1)}_1(\hull{})=1, \; \delta^{(0,1)}_1(\hull{1})=0.
\label{eq:delta2}
\end{align}
\else
, for $0 \leq \psi < \frac{n}{2}:$
\allowdisplaybreaks
\begin{align}
&\Delta^{(2\psi+1)}_{n,l}=\min_{i,j}\!\set{\Delta^{(\psi)}_{n/2,i}+\Delta^{(\psi)}_{n/2,j} \big\vert \bfT_o(i,j)=\mT_l},
\label{eq:deltao} \\
&\Delta^{(2\psi)}_{n,l}=\min_{i,j}\set{\Delta^{(\psi-1)}_{n/2,i}+\Delta^{(\psi-1)}_{n/2,j} \big\vert \bfT_e(i,j)=\mT_l}.
\label{eq:deltae}
\end{align}
The base of the recursion is
\begin{align}
&\delta^{(0,1)}_1(\hull{})=1, \; \delta^{(0,1)}_1(\hull{1})=0.
\label{eq:delta2}
\end{align}
\fi
\end{theorem}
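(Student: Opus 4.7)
My plan is to proceed by induction on $m$, where $n = 2^m$. The base case $n = 1$, $\vp = 0$ is a direct unfolding of Definition~\ref{df:eij}: with $Q^{(1)} = (1)$, the configuration $\mE = \emptyset$ gives $\chi^{(0,1)}_1(\emptyset) = \hull{1}$ and $\mE = \set{0}$ gives $\chi^{(0,1)}_1(\set{0}) = \hull{}$, which yields \eqref{eq:delta2}.

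For the inductive step I would first prove a structural lemma: for any $\vp \in [n]$ and any $\mE \subseteq [n]$, writing $\mE' = \mE \cap [n/2]$, $\mE'' = \set{i \geq 0 \;|\; i + n/2 \in \mE}$, and $\psi = \floor{(\vp-1)/2}$, if $\chi^{(\psi, 3)}_{n/2}(\mE') = \mT_i$ and $\chi^{(\psi, 3)}_{n/2}(\mE'') = \mT_j$ then $\chi^{(\vp, 3)}_n(\mE) = \bfT_o(i, j)$ when $\vp$ is odd and $\chi^{(\vp, 3)}_n(\mE) = \bfT_e(i, j)$ when $\vp$ is even. Given this lemma, the theorem is immediate: since $|\mE| = |\mE'| + |\mE''|$ and the two halves can be chosen independently, the minimum in \eqref{eq:deltadef} factors into a minimum over pairs $(i, j)$ with $\bfT_o(i, j) = \mT_l$ (respectively $\bfT_e(i, j) = \mT_l$) combined with separate minimizations inside each half, yielding \eqref{eq:deltao} and \eqref{eq:deltae}.

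To prove the lemma I would use the block factorization \eqref{eq:cptxz} to obtain $\cs(Q^{(n)}_{\overline{[\vp]}, \overline{\mE}}) = X^{(n)}_{\overline{[\vp]}, *} \cdot \cs(Q^{(n/2)}_{*, \overline{\mE'}}) + Z^{(n)}_{\overline{[\vp]}, *} \cdot \cs(Q^{(n/2)}_{*, \overline{\mE''}})$. The hypothesis $u_0^{\vp-1} = \bnull$ implies $x_0^{\psi-1} = z_0^{\psi-1} = \bnull$ by \eqref{eq:xdef}--\eqref{eq:zdef}; consequently the first $\psi$ columns of both $X^{(n)}_{\overline{[\vp]}, *}$ and $Z^{(n)}_{\overline{[\vp]}, *}$ vanish, and the first $\psi$ entries of any witnesses $\tilde q \in \cs(Q^{(n/2)}_{*, \overline{\mE'}})$, $\tilde r \in \cs(Q^{(n/2)}_{*, \overline{\mE''}})$ play no role in the equation $X^{(n)}_{\overline{[\vp]}, *} \tilde q^T + Z^{(n)}_{\overline{[\vp]}, *} \tilde r^T = (p_0^2, \bnull^{n-\vp-3})^T$ characterizing $p_0^2 \in \chi^{(\vp, 3)}_n(\mE)$.

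The main obstacle is to further restrict the witnesses so that $\tilde q_i = \tilde r_i = 0$ for all $i \geq \psi + 3$, collapsing the system to one in the three coordinates $\tilde q_\psi, \tilde q_{\psi+1}, \tilde q_{\psi+2}$ and their $\tilde r$ counterparts. This is achieved by a cascade on the rows whose right-hand side is zero: row $l = 2\psi+5$ gives $\tilde q_{\psi+2} + \tilde r_{\psi+2} = 0$, substituting this into row $l = 2\psi+6$ yields $\tilde q_{\psi+3} = 0$, row $l = 2\psi+7$ then gives $\tilde r_{\psi+3} = 0$, and the alternating pattern propagates to force $\tilde q_i = \tilde r_i = 0$ for all $i \geq \psi + 3$. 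The surviving equations collapse in the odd case to $X^{(6)}_{\overline{[1]}, *} q^T + Z^{(6)}_{\overline{[1]}, *} r^T = (p_0^2, 0, 0)^T$ with $q = (\tilde q_\psi, \tilde q_{\psi+1}, \tilde q_{\psi+2})$ and $r$ defined analogously, matching \eqref{eq:to}; in the even case a parallel reduction, shorter by one row on each side, yields $X^{(6)}_{\overline{[2]}, *} q^T + Z^{(6)}_{\overline{[2]}, *} r^T = (p_0^2, 0)^T$, matching \eqref{eq:te}. Finally, $\tilde q \in \cs(Q^{(n/2)}_{*, \overline{\mE'}})$ together with support in $\set{\psi, \psi+1, \psi+2}$ is equivalent to $q \in \chi^{(\psi, 3)}_{n/2}(\mE')$ (and similarly for $r$), so the map realized by the lemma is $\bfT_o$ (respectively $\bfT_e$), completing the proof.
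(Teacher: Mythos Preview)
Your proposal is correct and follows essentially the same approach as the paper: both factor $\cs(\hat Q)$ through the one-layer maps $X^{(n)}, Z^{(n)}$ and use their banded structure to show, via a row-by-row cascade on the zero right-hand sides, that all but three coordinates of the half-length witnesses must vanish, reducing the characterization of $\chi^{(\vp,3)}_n(\mE)$ to the defining equation of $\bfT_o$ or $\bfT_e$. The only cosmetic difference is that the paper works from the outset with the row-restricted column spaces $\cs(\hat Q')$, $\cs(\hat Q'')$ (its system \eqref{eq:xizeta} is your cascade re-indexed), whereas you keep the full $\cs(Q^{(n/2)}_{*,\overline{\mE'}})$ and pass to $\chi^{(\psi,3)}_{n/2}(\mE')$ by projection at the end.
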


\begin{remark}
\label{rm:edgeright}
Note that formulae \eqref{eq:deltao}--\eqref{eq:deltae} include the cases of $\Delta^{(n-2)}_{n,l}=\delta^{(n-2,3)}_{n}(\mT_l)$ and $\Delta^{(n-1)}_{n,l}=\delta^{(n-1,3)}_{n}(\mT_l)$.
They can be obtained according to the assumption in Remark~\ref{rm:chiedge} as follows. For $s \in \bS_{i+h}$, denote the set of tails of length $i$ by $s\vert_i=\set{p_0^{i-1} \;\big\vert\; p_0^{i+h-1} \in s}$. We assume that any erasure configuration does not erase $u_{n-1+h}$ for any $h>0$, i.e.
\begin{align*}
\delta_n^{(n-i,i+h)}(s)=
\begin{cases}
\delta_n^{(n-i,i)}(s\vert_i), &\text{if } \forall p \in \bF^h:(\bnull^{i},p)\in s\\
+\infty, & \text{otherwise}
\end{cases}
\end{align*}
The same assumption is applied for computing the values of $\Delta^{(0)}_{1,l}=\delta^{(0,3)}_1(\mT_l)$ from the values $\delta^{(0,1)}_1(s)$ for  $s\in \bS_1$ that are given by the base \eqref{eq:delta2} of the recursion.
This assumption, though not natural since symbols $u_{n+h}$, $h\geq 0$  do not exist, allows one to employ the unified formulae \eqref{eq:deltao}--\eqref{eq:deltae} for the cases of $\vp > n-3$.
\end{remark}
\begin{remark}
\label{rm:edgeleft}
Formula \eqref{eq:deltae} in the case of $\Delta^{(0)}_{n,l}$ leads to computing $\Delta^{(-1)}_{n/2,i}=\delta^{(-1,3)}_{n/2}(\mT_i)$, which is formally equal, for a given $\mT_i$, to the minimum weight of an erasure configuration which erases values $p\bigcdot u_{-1}^2$ for and only for $p\in \mT_i$. 
For the symbols $u_{-i}$, $i>0$, we do not employ the same assumption as in Remark~\ref{rm:edgeright}.
If one assumes that symbols with negative indices are always known and employs functions $\bfT_o$ and $\bfT_e$, one would obtain that input symbols on the current layer of convolutional polarizing transformation $u_{-2}$, $u_{-1}$, and input symbol $x_{-1}=u_{-2}+u_{-1}+u_0$ on the next layer are always known, which implies that $u_0$ is always known.
This would result in incorrect value of $\Delta_{n,l}^{(0)}$.
Thus, we assume that $u_{-i}$ for $i>0$ are always erased, which leads to 
\begin{align*}
\chi_n^{(-i,j)}(\mE)=\set{(\bnull^i,p) \;\big\vert\; p\in \chi_n^{(0,j-i)}(\mE)}, 0<i\leq j.
\end{align*}
\end{remark}
\begin{proof}
The proof is in the Appendix.
\end{proof}
The values  $d^{(i)}_n$ for the case of CvPC can be computed with Algorithm~\ref{alg:convw}. 
The three-dimensional array $\tau$ of subspaces of $\bF^3$ is initialized in lines \ref{l:tinit0}--\ref{l:tinit1}, such that $\tau[0][i][j]=\bfT_e(i,j)$ and $\tau[1][i][j]=\bfT_o(i,j)$.
The values $\Delta^{(0)}_{1,*}$ are computed in lines~\ref{l:q1init0}--\ref{l:q1init1}.
Function \texttt{M1Cluster}, presented in Algorithm~\ref{alg:m1cluster}, is called to obtain $\Delta^{(-1)}_{n,*}$ for $n=1$ and $n=2^\lambda$, respectively, in lines \ref{l:m1call0} and \ref{l:m1call1}. 

The values of $\Delta^{(\vp)}_{2^\lambda,*}$ for $-1 \leq \vp < 2^\lambda$ are computed by Theorem~\ref{tm:mw} in lines \ref{l:convmain0}--\ref{l:convmain1} and stored in array $C'$, using values of $\Delta^{(\psi)}_{2^{\lambda-1},*}$ for $-1 \leq \psi < 2^{\lambda-1}$, which are stored in array $C$. The values $d_n^{(i)}$ are obtained as $d[i], i\in[n]$.

The asymptotic complexity of the Algorithm~\ref{alg:convw} is  defined by the complexity of the main loop \ref{l:convmain0}--\ref{l:convmain1}.
The complexity of the $\lambda$-th iteration of the loop is defined by the complexity of the loop in lines \ref{l:convmainmid}--\ref{l:convmainmid1}, which consists of $2^\lambda$ iterations, each of them has complexity $O(1)$. Thus, the overall asymptotic complexity is $\sum_{\lambda=1}^{\log n}O(2^\lambda)=O(n)$.
\ifonecol
\linespread{1.2}
\fi

\begin{algorithm}
\caption{Computing $d^{(i)}_n, n=2^m$ for all $i\in[n]$}
\label{alg:convw}

\DontPrintSemicolon
\KwIn{$m$}
$X\gets \begin{pmatrix}
11000\\
01110\\
00011
\end{pmatrix},
Z\gets
\begin{pmatrix}
11000\\
00110\\
00001
\end{pmatrix}
$\\
\For{$i,j =0 \ldots 15 $}{ \label{l:tinit0}
  $\tau[0\dots 1][i][j]\gets\emptyset$\\
  \For{$(p,q) \in \mT_i \times \mT_j$}{
    $r \gets pX + qZ$ \\
    \lIf{$r_3^4=\bnull^2$}{$\tau[1][i][j]\gets\tau[1][i][j] \cup r_0^2$}
    \lIf{$r_4=0$}{$\tau[0][i][j] \gets \tau[0][i][j] \cup r_1^3$} \label{l:tinit1}
  }
}
\For{$i=0\ldots 15 $}{ \label{l:q1init0}
  \lIf{$\mT_i=\bF^3$}{$C[0][0][i] \gets 0$}
  \lElseIf{$\mT_i=\hull{010,001}$}{$C[0][i] \gets 1$}
  \lElse{$C[0][i] \gets +\infty$} \label{l:q1init1}
}
$C[-1] \gets \texttt{M1Cluster}(C[0])$ \label{l:m1call0}\\
\For{$\lambda=1 \dots m$}{ \label{l:convmain0}
  $C'[0 \dots 2^{\lambda}-1][0\dots15] \gets +\infty$\\
  \For{$\vp = 0 \dots 2^{\lambda}-1$}{ \label{l:convmainmid}
    $\psi=\ceil{\frac{\vp}{2}}-1$\\
    \For{$i,j =0 \ldots 15$}{
      let $l:\mT_l=\tau[\vp \bmod 2][i][j]$\\
      $C'[\vp][l]\gets \min\set{C[\psi][i]\!+\!C[\psi][j], C'[\vp][l]}$ \label{l:convmainmid1}
    }
  }
  $C'[-1] \gets \texttt{M1Cluster}(C'[0])$ \\ \label{l:m1call1} 
  \texttt{swap}$(C,C')$ \label{l:convmain1}
} 
\lFor{$i =0\ldots n-1$}{$d[i] \gets \min_{s \in \bS^3:(1,0,0)\notin s} C[i][s]$}
\Return{$d[0\ldots n-1]$}

\end{algorithm}

\begin{algorithm}
\caption{M1Cluster}
\label{alg:m1cluster}
\KwIn{$C$, array of $\Delta^{(0)}_{n,*}$}
\KwOut{$D$, array of $\Delta^{(-1)}_{n,*}$}
$D[0 \dots 15] \gets +\infty$\\
\For{$i=0\ldots 15$}{
  \If{$\mT_i \subseteq \hull{100,010}$}{
    let $j:\mT_j=\set{(a,p_0^1) \;|\; (p_0^1,0) \in \mT_i, a \in \bF} $\\
    $D[j]\gets \min\set{C[i], D[j]}$
    }
}
\Return $D[0\dots 15]$
\end{algorithm}

\ifonecol
\linespread{2.0}
\fi

\begin{figure}
\centering
\ifonecol
\includegraphics[width=0.75\textwidth]{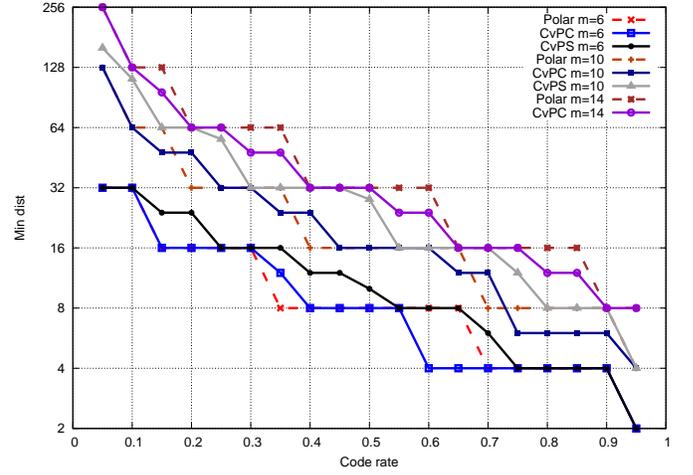}
\else
\includegraphics[width=0.5\textwidth]{d_pc_cvpc_cvps.eps}
\fi
\caption{Minimum distance of polar codes, CvPCs and CvPSs, constructed for AWGN channel for target FER $0.001$.}
\label{fig:mindist}
\end{figure}
In Fig.~\ref{fig:mindist} the lower bound on minimum distance, computed by  \eqref{eq:dbeta}, for CvPCs of lengths $64$, $1024$, $16384$ is presented. The codes are obtained via the Monte-Carlo method by minimization of the $E_b/N_0$ needed to achieve the SC decoding error probability $10^{-3}$. For comparison, we also report the results for Arikan polar codes, which are optimized in the same way.
One can see that CvPCs can have lower, equal or higher minimum distance, compared to Arikan polar codes. 

Unlike the case of Arikan polarizing transformation $A^{(n)}$, the weight of the $i$-th row of CvPT $Q^{(n)}$ is not necessarily equal to $d^{(i)}_n$. Thus, the bound \eqref{eq:dbeta} is not exact at least for codes with $\mI=\{i\}$.
In general, it is not known, for which cases the bound is exact.
However, by employing the low-weight codeword search algorithm presented in \cite{canteaut98new}, we verified that the bound is exact for CvPCs with $m=5, \dots, 13$, rates $\frac{1}{20}, \dots \frac{19}{20}$ and target FER of SC decoding $10^{-2}$, $10^{-3}$, $10^{-4}$, $10^{-5}$, and $10^{-6}$.


\section{Convolutional Polar Subcodes}
\label{s:subcodes}
In general, the SC decoding algorithm for polar-like codes does not provide ML decoding. 
The Tal-Vardy list decoding algorithm \cite{tal2015list} for polar codes can be immediately extended to the case of CvPC using the techniques presented in  \cite{morozov2018efficient}.
With sufficiently large list size  $L$ the SCL algorithm delivers near-ML decoding. The SCL decoding error probability of convolutional polar codes is lower than that of classical polar codes, but still can be improved by extending the construction of randomized polar subcodes \cite{trifonov2017randomized} to the case of convolutional polarizing transformation.

By Lemma~\ref{lm:dcoset}, any codeword $c_0^{n-1}=u_0^{n-1}G^{(n)}$ of weight $d$ corresponds to vector $u_0^{n-1}$ with at least one symbol $u_i=1, i \in \mI$, such that $d^{(i)}_{n}\leq d$.
In the case of polar codes, $d^{(i)}_n$ is equal to the weight of the $i$-th row of $A^{(n)}$. In the case of CvPCs one can obtain $d^{(i)}_n$ by Algorithm~\ref{alg:convw}.

A code construction, which has low SCL decoding error probability, was proposed in \cite{trifonov2017randomized} for the case of classical polar codes as polar subcodes.
Polar subcodes are obtained as a generalization of polar codes, where some symbols $u_\vp, \vp\in\mD$, called dynamic frozen symbols, are not set to zero, but to  linear combinations of previous symbols $u_i,i<\vp$. This approach can be immediately extended to the case of convolutional polarizing transformation. Namely, the dynamic freezing constraints should be constructed, so that they involve all non-frozen symbols $u_i$ with the smallest $d_n^{(i)}$, but the indices of dynamic frozen symbols $i\in\mathcal D$ should be as small as possible, so that the SCL decoding algorithm can process these constraints at the earliest possible phases, minimizing thus the probability of a correct path being killed.

This results in the following code construction algorithm: 
\begin{enumerate}
\item Construct $(n,k+f)$ convolutional polar code, i.e. assign $u_{\mS}= \bnull$ for the static frozen set $\mS \subset [n]$ of worst $n-k-f$ bit subchannels.

\item Choose dynamic frozen set $\mD\subseteq [n] \setminus \mS$ as the set of $f$ indices of minimum-weight bit subchannels with the largest indices, that are not static frozen.
Set
\begin{align*}
u_i= \sum_{j\in \mI} V_{i,j}u_j, \; i \in \mD,\; \mI=[n]\setminus \mF,
\end{align*}
where the frozen set $\mF=\mS \cup \mD$ consists of indices of static frozen or dynamic frozen symbols, and $V_{i,j}$ are distributed uniformly over $\bF$.
\end{enumerate}
The set $\mI$ for a convolutional polar code optimized for SC decoding can be chosen either by evolution of erasure probabilities proposed in \cite{ferris2017convolutional}, or by Monte-Carlo simulations of genie-aided SC decoder. 
Due to lack of analysis techniques for the list SC decoding algorithm, the optimal value of $f$ should be determined by simulations.

Another component of the construction introduced in \cite{trifonov2017randomized} is type-B dynamic freezing constraints, which are imposed on the symbols transmitted over the least reliable yet unfrozen subchannels. These  constraints speed up error propagation for incorrect paths in the list SC algorithm, so that the probabilities \eqref{eq:wdef} of these paths decrease quickly, reducing thus the probability of a correct path being killed. 
However, simulations of moderate-length CvPS show that type-B dynamic frozen symbols do not provide any noticeable gain in the case of CvPS.

\section{Performance of Convolutional Polar Subcodes}
\label{s:perf}

\begin{figure}
\centering
\ifonecol
\includegraphics[width=0.75\textwidth]{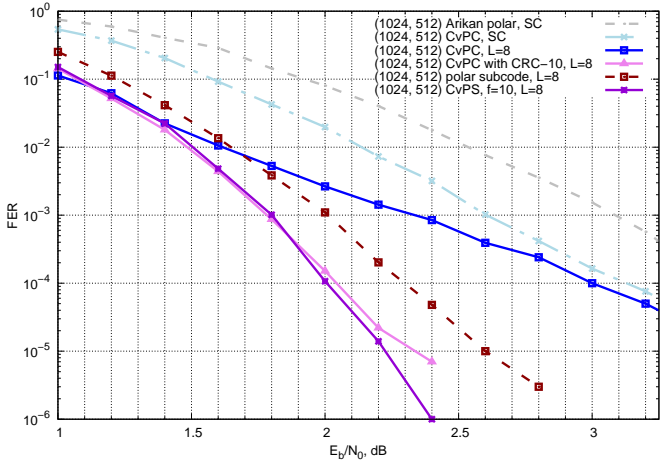}
\else
\includegraphics[width=0.5\textwidth]{subcode.eps}
\fi
\caption{Performance of $(1024, 512)$ CvPS with $f=10$ in AWGN channel}
\label{fig:1024}
\end{figure}

In Fig.~\ref{fig:1024} the performance of $(1024,512)$ CvPS, polar code and polar subcode is presented for $f=10$ for the case of AWGN channel. The polar code and the polar subcode are constructed for AWGN channel with $E_b/N_0=2$ dB using Gaussian approximation of density evolution \cite{trifonov2012efficient}, and the CvPS is constructed for the same channel using Monte-Carlo simulations for subchannels qualities.
One can see that the CvPS outperforms randomized polar subcodes \cite{trifonov2017randomized}, CvPC \cite{ferris2017convolutional} and CvPC concatenated with CRC-10.

\begin{figure}
\centering
\ifonecol
\includegraphics[width=0.75\textwidth]{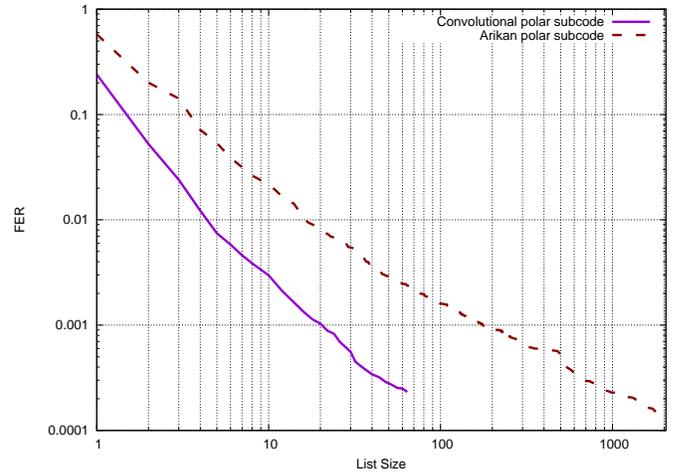}
\else
\includegraphics[width=0.5\textwidth]{4096_conv_list_fer.eps}
\fi
\caption{Performance of $(4096, 2048)$ CvPS with $f=12$ in AWGN channel with $E_b/N_0=1.25$ dB}
\label{fig:4096list}
\end{figure}

In Fig.~\ref{fig:4096list} the performance of a $(4096,2048)$ CvPS with $f=12$  type-A dynamic frozen symbols is presented. Transmission of BPSK-modulated symbols over AWGN channel with $E_b/N_0=1.25$ dB is considered. The decoding algorithm is the SCL decoding with different values of list size that are shown in the x-axis.
The performance of CvPSs is compared to that of a polar subcode with $f=12$ type-A dynamic frozen symbols and $52$ type-B dynamic frozen symbols. One can see that the CvPS under SCL decoding with the same list size outperforms classical polar subcodes.
The smaller list size can be used to achieve the same FER, which allows less sophisticated hardware implementation.

\begin{figure}
\centering
\ifonecol
\includegraphics[width=0.75\textwidth]{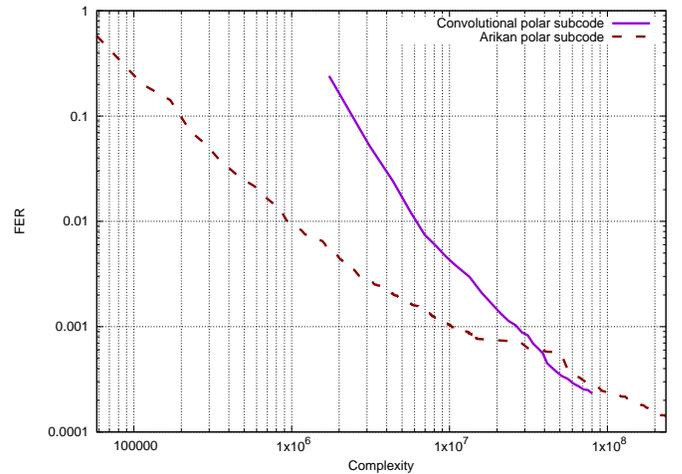}
\else
\includegraphics[width=0.5\textwidth]{4096_conv_compl_fer.eps}
\fi
\caption{SCL decoding complexity of $(4096, 2048)$ CvPS with $f=12$}
\label{fig:4096compl}
\end{figure}

In Fig.~\ref{fig:4096compl} the complexity (the number of operations) of SCL decoding, based on the expressions derived in \cite{morozov2018efficient}, of the described above codes is compared for list size $L=1\dots 64$ for the CvPS and $L=1\dots 1024$ for the polar subcode.
The complexity is obtained as the number of additions and comparisons of LLRs.
The complexity of SC decoding for CvPS is approximately $46.5n\log n$, as shown in \cite{morozov2018efficient}. The complexity of SC decoding of polar codes is $n \log n$.
However, as was shown in \cite{ferris2017convolutional}, CvPT induces stronger polarization than Arikan polarizing transformation, so the smaller list size is needed to achieve the same FER. This leads to the smaller complexity needed to achieve FER less than $6\cdot 10^{-4}$ in the case of CvPS, because achieving this FER requires list size $L=352$ for polar subcodes and only $L=28$ for CvPS. 
Furthermore, for a large list size the SCL decoding is near-ML, and for sufficiently good channel FER of ML-decoding is mainly defined by the minimum distance and the error coefficient. Dynamic frozen symbols decrease the error coefficient and may even increase the minimum distance of a CvPS.
In Fig.~2 one can see that the minimum distance of CvPS is higher than that of CvPC.

\section{Conclusions}
In this paper a tight lower bound on minimum distance of convolutional polar codes is provided.
Furthermore, a generalization of the randomized construction of polar subcodes to the case of convolutional polarizing transformation is proposed.
Simulations show that the proposed code construction has lower frame error rate under SCL decoding \cite{morozov2018efficient} compared to polar subcodes with the same list size.
The complexity for  achieving the same FER with convolutional polar subcodes can be lower than in the case of polar subcodes  \cite{trifonov2017randomized} based on Arikan polarizing transformation.


\appendix

\textit{Proof of Theorem~\ref{tm:mw}.}
For erasure configuration $\mE \subseteq [n]$, denote $\mE'=\mE \cap [n/2]$ and $\mE''=\set{j-n/2 \mid j\in\mE \setminus [n/2]}$.
We now consider the case of $\varphi=2\psi+1$ and prove \eqref{eq:deltao}.

Note that $u_0^{2\psi}=\bnull^{2\psi+1}$ implies $x_0^{\psi-1}=z_0^{\psi-1}=\bnull^\psi$.
By \eqref{eq:cptxz} one obtains 
\begin{align*}
\hat Q=\left(\hat X \hat Q'\;,\;
\hat Z \hat Q''\right),
\end{align*}
where $\hat Q=Q^{(n)}_{\overline{[2\psi+1]},\ove}$,
$\hat Q'=Q^{(n/2)}_{\overline{[\psi]}, \overline{\mE'}}$, 
$\hat Q''=Q^{(n/2)}_{\overline{[\psi]}, \overline{\mE''}}$,
$\hat X=X^{(n)}_{\overline{[2\psi+1]},\overline{[\psi]}}$,
$\hat Z=Z^{(n)}_{\overline{[2\psi+1]},\overline{[\psi]}}$.
By \eqref{eq:chidef}, 
\ifonecol
\begin{align*}
p_0^2\in \chi^{(\vp,3)}_n(\mE) \iff \exists q:(p_0^2,\bnull^{k-3})^T=\hat Qq^T=(\hat X \hat Q', \hat Z \hat Q'')q^T
=\hat X \hat Q'q'^T + \hat Z \hat Q''q''^T,
\end{align*}
\else
$p_0^2\in \chi^{(\vp,3)}_n(\mE)$ iff there exists $q$:
\begin{align*}
(p_0^2,\bnull^{k-3})^T=\hat Qq^T=(\hat X \hat Q', \hat Z \hat Q'')q^T
=\hat X \hat Q'q'^T + \hat Z \hat Q''q''^T,
\end{align*}
\fi
where $q=(q',q'')$, $k=n-\vp$, which implies, in particular,
\begin{align}
(\hat X\hat Q'q'^T)_{\overline{[3]}} = (\hat Z\hat Q''q''^T)_{\overline{[3]}}.
\label{eq:o3}
\end{align}
Denote $a=q'\hat Q'^T$, $b=q''\hat Q''^T$. Thus, $a\in\cs(\hat Q')$, $b \in \cs(\hat Q'')$.
Then \eqref{eq:o3} implies $a\hat X_{\overline{[3]},*}^T = b\hat Z^T_{\overline{[3]},*}$, so from \eqref{eq:xdef}--\eqref{eq:zdef} one obtains
\begin{align*}
a\begin{pmatrix}
000000\ldots\\
100000\ldots\\
111000\ldots\\
001110\ldots\\
\dots
\end{pmatrix}=
b\begin{pmatrix}
000000\ldots\\
100000\ldots\\
011000\ldots\\
000110\ldots\\
\dots
\end{pmatrix},
\end{align*}
which leads to the system of equations 
\begin{align}
\begin{cases}
a_i+a_{i+1}=b_i, & i=1 \ldots n/2-\psi-2\\
a_i=b_i, & i=2 \ldots n/2-\psi-1
\end{cases}
\label{eq:xizeta}
\end{align}
It is easy to see that \eqref{eq:xizeta} implies $a_i=b_i=0$ for $i\geq 3$.
Let $k'=n/2-\psi$. By above consideration, for any $p\in\bF^3$ one has $(p, \bnull^{k-3})\in \cs(\hat Q)$ iff 
\ifonecol
\begin{align}
\exists p', p'' \in \bF^3: 
&(p', \bnull^{k'-3})\in \cs(\hat Q'), (p'', \bnull^{k'-3})\in \cs(\hat Q'') \text{ and } (p,\bnull^3)^T=
\begin{pmatrix}
100\\
110\\
010\\
011\\
001\\
001
\end{pmatrix}(p')^T
+
\begin{pmatrix}
100\\
100\\
010\\
010\\
001\\
001
\end{pmatrix}(p'')^T.
\label{eq:ppspss}
\end{align}
\else
there exists $p', p'' \in \bF^3$, s.t. $(p', \bnull^{k'-3})\in \cs(\hat Q')$, $(p'', \bnull^{k'-3})\in \cs(\hat Q'')$, and
\begin{align}
&(p,\bnull^3)^T=
\begin{pmatrix}
100\\
110\\
010\\
011\\
001\\
001
\end{pmatrix}(p')^T
+
\begin{pmatrix}
100\\
100\\
010\\
010\\
001\\
001
\end{pmatrix}(p'')^T.
\label{eq:ppspss}
\end{align}
\fi
Note that two last elements of vector in the left-hand side equals $0$, and two last rows in the right hand size of  \eqref{eq:ppspss} are identical, so last rows of these matrices can be removed. The resulting matrices are equal to 
$X^{(6)}_{\overline{[1]},*}$ and $Z^{(6)}_{\overline{[1]},*}$, respectively. Recalling \eqref{eq:chidef}, one obtains that $\chi_n^{(2\psi+1,3)}(\mE)$ consists of all $p_0^2$, for which there exist ${p'}\in \chi_{n/2}^{(\psi,3)}(\mE')$, $p'' \in \chi_{n/2}^{(\psi,3)}(\mE'')$:
\begin{align}
(p_0^2,\bnull^2)^T=X^{(6)}_{\overline{[1]},*}p'^T+Z^{(6)}_{\overline{[1]},*}p''^T.
\label{eq:xpszpss}
\end{align}
Observe that \eqref{eq:xpszpss} is equivalent to the equation in the right-hand side of \eqref{eq:to}.
Obviously, $|\mE|=|\mE'|+|\mE''|$ and the minimal cardinality of $(\mT_l,2\psi+1,3)$-configuration  $|\mE|$ for each $\mT_l \in \bS_3$ can be found exactly as it is stated in \eqref{eq:deltao}.

Equality \eqref{eq:deltae} can be proved similarly.

\bibliographystyle{IEEETran}

\end{document}